\newcommand{\R}{{\rm I\!R}}
\def\scr#1{{\cal #1}}
\newtheorem{theorem}{Theorem}
\newtheorem{lemma}{Lemma}
\newtheorem{proposition}{Proposition}
\newtheorem{example}{Example}
\def\qed{ \rule{.08in}{.08in}}
\def\send#1#2{\stackrel{#1}{\hbox to #2{\rightarrowfill}}}
\def\-{\!\!\!\!\!-}
 \def\qed{ \rule{.1in}{.1in}}
\def\scr#1{{\cal #1}}
\def\qed{ \rule{.1in}{.1in}}
\def\R{{\rm I\!R}} 
\newcounter{seqn}[equation]
\def\theseqn{\arabic{equation}\alph{seqn}}
\def\endseqn{\eqno \@seqnnum
$$\ignorespaces}
\def\@seqnnum{(\theseqn)}
\newskip\mcentering \mcentering=0pt plus 1000pt minus 1000pt
\def\meqalignno#1{
\halign to\displaywidth{
    \hbox to 0pt{\kern\displaywidth\llap{$##$}\hss}\tabskip=\mcentering
    &\hfil$\displaystyle{##}$\tabskip=\mcentering
   &&$\displaystyle{{}##}$\hfil\tabskip=\mcentering
    \crcr
    #1\crcr}}
\def\dspace{\multiply\normalbaselineskip 150
		  \divide\normalbaselineskip 100 \normalbaselines
		  \csname @@normalbaselineskip\endcsname\normalbaselineskip}
\def\sspace{\multiply\normalbaselineskip 200
		 \divide\normalbaselineskip 300 \normalbaselines
		 \csname @@normalbaselineskip\endcsname\normalbaselineskip}
\def\sdspace{\multiply\normalbaselineskip 160
		 \divide\normalbaselineskip 150 \normalbaselines
		 \csname @@normalbaselineskip\endcsname\normalbaselineskip}
\def\@{\tilde}
\def\3dot#1{\buildrel\textstyle...\over#1}
\newcommand{\drawnodex}[6]{
	\path (#1) node[draw,thick,circle,minimum width=1em,minimum height=1em,inner sep=0.2ex] (#2) {} node[inner sep=0.2ex,fill=white,#3=0 of #2] {#4} node[inner sep=0.2ex,fill=white,#5=0 of #2] {#6};
}
\newcommand{\drawfoe}[2]{
	\path (#1) -- (#2) node(_1)[allow upside down,sloped,above,minimum height=1.2em,anchor=center,pos=0.3] {};
	\path (#1) -- (#2) node(_2)[allow upside down,sloped,above,minimum height=1.2em,anchor=center,pos=0.7] {};
	\draw[-stealth,ultra thick,red] (#1) .. controls (_1.north) and (_2.north) .. (#2);
	\draw[-stealth,ultra thick,red] (#2) .. controls (_2.south) and (_1.south) .. (#1);
}
\newcommand{\drawfoex}[5][0.5]{
	\path (#2) -- (#3) node(_1)[allow upside down,sloped,above,minimum height=1.2em,anchor=center,pos=0.3] {};
	\path (#2) -- (#3) node(_2)[allow upside down,sloped,above,minimum height=1.2em,anchor=center,pos=0.7] {};
	\draw[-stealth,ultra thick,red] (#2) .. controls (_1.north) and (_2.north) .. (#3) node[inner sep=0.2ex,fill=white,sloped,pos=#1] {#4};
	\draw[-stealth,ultra thick,red] (#3) .. controls (_2.south) and (_1.south) .. (#2) node[inner sep=0.2ex,fill=white,sloped,pos=#1] {#5};
}
\begin{document}

\title{The Power Allocation Game on A Network: Balanced Equilibrium}


\author{{\rm Yuke Li, and A. Stephen Morse} \thanks{
This work was supported by National Science Foundation grant n.1607101.00, and US Air Force grant n. FA9550-16-1-0290. Y.~Li and A. S.~Morse are respectively with the Department of Political Science and the Department of Electrical Engineering, Yale University (\texttt{\{yuke.li, as.morse\}@yale.edu}). 
}}

\maketitle

\begin{abstract}

This paper studies a special kind of equilibrium termed as ``balanced equilibrium'' which arises in the power allocation game defined in \cite{allocation}. In equilibrium, each country  in antagonism has to use all of its own power to counteract received threats,
and the ``threats'' made to each adversary just balance out the threats received from that adversary.  This paper establishes conditions on different types of networked international environments in order for this equilibrium to exist. The paper also links the existence of this type of equilibrium    on structurally balanced graphs to the Hall's Maximum Matching problem and the Max Flow problem. 

\keywords power allocation, balancing, threat, maximum matching, max flow

\end{abstract}

\section{Introduction}



In a recent paper \cite{allocation}, a power allocation game  on networks  is developed to study countries' strategic behaviors for allocating resources among one another in an international environment.  In playing the game, each  country allocates its  total power among its   friends and foes in order to ensure the survival of its friends and itself while opposing the survival  of  its foes.

This paper investigates the existence of a  `balanced equilibrium' in a power allocation game. The motivation for this comes from both the simple technical structure of the equilibrium of interest and its real world implications. A condition  for a balanced equilibrium to exist in the  power allocation game  is  similar in  form to the condition for  existence of an equilibrium  as  expressed in  Hall's maximum matching theorem (see \cite{ford2015flows,lovasz2009matching,gerards1995matching,hall1935representatives}). 

This paper belongs to a vast literature on agents' interactions on signed graphs. This literature has its roots in the study of consensus in distributed, multi-agent systems, where the earliest work include \cite{tsitsiklis1986distributed, jadbabaie2003coordination, olfati2007consensus}. The literature 
 has recently been broadened to  address  scenarios where antagonistic interactions also exist; correspondingly, signed graphs become a focus in the analysis (i.e., a signed graph is a graph where each edge has a positive sign denoting a cooperative relation or a negative sign denoting a conflictual relation)  \cite{altafini2015predictable, altafini2013consensus, altafini2012dynamics, proskurnikov2016opinion,liu2016discrete,shi2016evolution,liu2015stability,proskurnikov2014consensus,valcher2014consensus, meng2016behaviors}.  A particular type of signed graph called a  `structurally balanced graph' \cite{cartwright1956structural}) plays a key role in the `modulus consensus'  problem  (e.g., \cite{altafini2013consensus, valcher2014consensus}).

In contrast with the consensus problem on signed graphs, an optimization framework is necessarily needed for studying agents' strategic interactions on signed graphs. Moreover,  a game-theoretic framework can naturally be formulated to capture the noncooperative and cooperative scenarios innate to those interactions. It appears that \cite{konig2017networks},  \cite{allocation},  \cite{survival} and this paper are the first set of papers on the topic of  `games on signed graphs'; a difference between the power allocation game in the latter three papers and \cite{konig2017networks} is that each agent has a total resource constraint in the power allocation game. It should also be noted that ``games on signed graphs'' means  games where each agent's strategy does not involve changing their network of relations. Games where agents' strategies do involve changing networks are usually referred to as  `network formation games' \cite{jackson2005survey}; network formation games on countries' relation dynamics include \cite{hiller2017friends}, \cite{jackson2015networks} and \cite{relation}.  In particular, \cite{hiller2017friends} studies what kinds of networks in Nash equilibrium are structurally balanced).

The organization of this paper is as follows. First, in Section \ref{pag} the power allocation game is briefly summarized. Second, in Section \ref{bal} the notion of a  balanced equilibrium will be defined and its real world implications will be discussed. Third, the existence of such an equilibrium will be investigated in power allocation games  for different networked environments. In particular,  a connection will be drawn between balanced equilibria  and the Hall Maximum Matching problem and the Max Flow problem in combinatorial optimization.

\section{The Power Allocation Game} \label{pag}

\subsection{Formulation} By the  {\em power allocation game}  is meant a distributed resource allocation game between $n$ countries with labels in $\mathbf{n}= \{1,2,\ldots,n\}$\cite{allocation}. The game is formulated on a simple, undirected,  signed graph $\mathbb{E} = \{\mathcal{V}, \mathcal{E}\}$ called ``an environment graph'' \cite{survival} whose $n$ vertices correspond to the countries and whose $m$ edges represent relationships between countries. An edge between distinct vertices $i$ and $j$, denoted by $(i,j)$, is labeled with a plus sign if countries $i$ and $j$ are friends and with a minus sign if countries $i$ and $j$ are adversaries. Let the set of all friendly pairs be $\mathcal{R}_{\mathcal{F}}$ and the set of all adversarial pairs be $\mathcal{R}_{\mathcal{A}}$. For each $i\in\mathbf{n}$, $\scr{F}_i$ and $\scr{A}_i$ denote the sets of labels of country $i$'s friends and adversaries respectively; it is assumed that $i\in\scr{F}_i$ and that $\scr{F}_i$ and $\scr{A}_i$ are disjoint sets.


Each country $i$ possesses a nonnegative quantity $p_i$ called the {\em total power} of country $i$. An allocation of this power,  called a {\em strategy}, is a nonnegative $1\times n$ row vector $u_i$ whose $j$th component $u_{ij}$ is that part of $p_i$ which country $i$ allocates under the strategy to either support country $j$ if $j\in\scr{F}_i$ or to the demise of country $j$ if $j\in\scr{A}_i$; accordingly $u_{ij}= 0$ if $j\not\in\scr{F}_i\cup\scr{A}_i$ and $u_{i1}+u_{i2}+\cdots +u_{in} = p_i$. The goal of the game is for each country to choose a strategy which contributes to the demise of all of its adversaries and to the support of all of its friends.

Each set of country strategies $\{u_i,\;i\in\mathbf{n}\}$ determines an
 $n\times n$ matrix  $U$ whose $i$th row is $u_i$. 
 Thus $U = \begin{bmatrix}u_{ij}\end{bmatrix}_{n\times n}$ is a nonnegative matrix such that, for each $i\in\mathbf{n}$, $u_{i1}+u_{i2}+\cdots +u_{in} = p_i$. Any such matrix is called a {\em power allocation matrix}  and $\scr{U}$ is the set of all $n\times n$ strategy matrices.

\subsection{Multi-front Pursuit of Survival}

Just how  each country  allocates its  power in a power allocation game \cite{allocation,survival} is consistent   with the fundamental assumptions about countries' behaviors in classical international relations theory \cite{waltz2010theory}.
In particular, 
each power allocation matrix $U$ determines for each $i\in\mathbf{n}$, the {\em total support} $\sigma_i(U)$ of
 country $i$ and the {\em total threat}  $\tau_i(U)$ against country $i$.  Here
 $\sigma_i:\scr{U}\rightarrow\R$ and $\tau_i:\scr{U}\rightarrow \R$ are non-negative
 valued maps defined by
$U\longmapsto \sum_{j\in\scr{F}_i}u_{ji} +\sum_{j\in\scr{A}_i}u_{ij}$
 and $U\longmapsto \sum_{j\in\scr{A}_i}u_{ji}$
  respectively. Thus country $i$'s total support is the sum of the amounts of power
   each of country $i$'s friends
  allocate to its support plus the sum of the amounts of power country $i$ allocates  to the
  destruction of  all of its adversaries. Country $i$'s total threat, on the other hand, is the sum of the amounts of power
  country $i$'s adversaries allocate to its destruction. These allocations in turn determine
   country $i$'s {\em state} $x_i(U)$ which may be safe, precarious, or unsafe depending on the relative
    values of $\sigma _i(U)$ and $\tau_i(U)$. In particular, 
     $x_i(U) = $ safe
    if $\sigma_i(U)>\tau_i(U)$, $x_i(U)=$ precarious if $\sigma_i(U)=\tau_i(U)$, or
    $x_i(U) = $ unsafe if $\sigma_i(U)<\tau_i(U)$.

In playing the power allocation game, each country  selects its own  strategy  in accordance with certain weak and/or strong preferences.
A sufficient set of conditions for  country $i$ to {\em weakly prefer} power allocation matrix $V\in\scr{U}$ over power allocation matrix $U\in\scr{U}$ are as follows
\begin{enumerate}
\item  For all $j\in\scr{F}_i$
 either $x_j(V)\in$ \{safe, precarious\}, or $x_j(U)\in$ \{unsafe\}, or both.
 \item For all $j\in\scr{A}_i$ either $x_j(V)\in $ \{unsafe, precarious\}, or $x_j(U)\in$ \{safe\}, or both.
\end{enumerate}
Weak preference  by country $i$ of  $V$ over $U$ is denoted by $U \preceq V$.
Meanwhile, a sufficient condition for country
 $i$ to be {\em indifferent} to the choice between $V$ and $U$ is  that $x_i(U)=x_j(V)$ for all $j\in\scr{F}_i\cup\scr{A}_i$.  This is denoted by $V\sim U$.
Finally, a sufficient condition for country $i$  to {\em strongly prefer} $V$ over $U$ is that $x_i(V)$ be a  safe or precarious  state and $x_i(U)$ be an unsafe state. Strong preference by country $i$ of $V$ over $U$ is denoted by $U \prec V$.

 The Nash equilibrium concept is naturally employed in the power allocation game  to make predictions. Let country $i$'s deviation from the power allocation matrix $U$ be a nonnegative-valued $1\times n$ row vector $d_{i}$ such that $u_{i} + d_{i}$ is a valid strategy that satisfies the total power constraint for country $i$. The deviation set $\mathcal{D}_{i}(U)$ is the set of all possible deviations of country $i$'s power from the power allocation matrix $U$. In the context of the power allocation game, a power allocation matrix $U$ is a pure strategy Nash Equilibrium if no one deviation in strategy by any single country $i$ is `profitable' for country  $i$. In other words 
\begin{equation*}
U + e_{i} d_{i} \preceq U, \;\;\;\;\; {\rm for\; all} \;\;\;\;\; d_{i} \in \mathcal{D}_{i}(U),
\end{equation*} 
where $e_{i}$ is the  $i$th  unit $n$-vector.

Denote by $\mathcal{U}^{*}$ the set of pure strategy Nash equilibria. Call 
 $U\in\mathcal{U}^*$   {\em equilibrium equivalent} to $V\in\mathcal{U}^*$ if and only if $x(U) = x(V)$. 
 Clearly the relation `equilibrium equivalence' is the equivalence kernel of the restriction of $x$ to $\scr{U}^*$ and thus is an equivalence relation on $\mathcal{U}^{*}$.
Let $[U]_*$ be the \emph{equilibrium equivalence class} of $U \in \mathcal{U}^{*}$. Obviously, the total number of equilibrium equivalence classes is at most $3^{n}$, which in turn is the cardinality of the co-domain of $x$.

 
\section{Balanced Equilibrium} \label{bal}
The aim of this section is to explain what is meant by a balanced equilibrium. We begin with a motivating example.

\noindent{\bf Example 1:} Consider an environment consisting of $n=3$ countries whose total powers are $p_1 = 8$, $p_2 = 6$ and $p_3 = 4$.  Assume that all three are adversaries of each other in which case $\mathbb{E} $ is a complete graph with $-$ assigned to all three edges. There are at least  four possible equilibrium equivalence classes $[U_i]_*, i\in\{1,2,3,4\}$ where
$$U_1 = \begin{bmatrix} 2  &4 & 2\\ 2 & 0 & 4\\ 0 & 4 & 0  \end{bmatrix},\hspace{.1in}
U_2 = \begin{bmatrix} 0  &4 & 4\\ 5 & 0 & 1\\ 4 & 0 & 0  \end{bmatrix},
\hspace{.1in} U_3 = \begin{bmatrix} 0  &6 & 2\\ 6 & 0 & 0\\ 3 & 1 & 0  \end{bmatrix},$$ and
$$U_4 = \begin{bmatrix} 0  &5 & 3\\ 5 & 0 & 1\\ 3 & 1 & 0  \end{bmatrix}$$
It can be shown that for each $i\in\{1,2,3\}$, any allocation of power determined by any power allocation matrix in  $[U_i]_*$, results in  the survival of only country $i$ \cite{survival}. It can also be shown that 
any allocation of power determined by any power allocation matrix in  $[U_4]_*$, causes    all  three countries to be in a   precarious state. All power allocation matrices within this class are symmetric and each one leaves all three countries without any remaining power. These observations motivate the following definition.

 A power allocation matrix $U$ of a power allocation game  is a \emph{balanced equilibrium} if

 \begin{enumerate}
\item $\forall i \in \mathbf{n}$ such that $ \mathcal{A}_{i}$ is the empty set, $u_{ii} = p_{i}$.\label{c1}
\item $\forall i \in \mathbf{n}$ such that $\mathcal{A}_{i} $ is nonempty,
 $u_{ii} = 0$ and $$\sum_{j \in \mathcal{A}_{i}}u_{ij} = p_{i}$$ \label{c2}
\item $\forall (i,j) \in \mathcal{R}_{\mathcal{A}}$, $u_{ij} = u_{ji}$
where $\mathcal{R}_{\mathcal{A}}$  is the set of all pairs $(i,j)$ 
for which countries $i$ and $j$  are adversaries.\label{c3}
\end{enumerate}
It is easy to see that these conditions  imply that  $U$ will be a balanced equilibrium 
if and only if for some permutation $P$, $U = PDP'$  where prime denotes transpose, $D = $block diagonal $\{D_1, D_2\}$,
 $D_1$ is a symmetric matrix with zero diagonals, and $D_2$ is a diagonal matrix whose diagonal entries are the total powers of those countries which have no adversaries.
 Thus in a balanced equilibrium, each country with no adversaries allocates all of its power to itself while each country with adversaries allocates all of its power exclusively to the demise of its 
 adversaries. Each power allocation matrix within the equilibrium class $[U_4]_*$  in Example 1 is a balanced equilibrium.
 We have the following result.

\begin{theorem}
A balanced equilibrium   is a pure strategy Nash equilibrium. 
\label{t1}\end{theorem}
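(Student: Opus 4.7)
My plan is to verify the Nash condition directly by checking that the two sufficient conditions in the definition of weak preference both hold for $U + e_i d_i \preceq U$ under any admissible deviation $d_i$ from a balanced equilibrium $U$. The key observation driving the argument is that in such a $U$, every country $j$ is in a benign state (safe or precarious), and every country with at least one adversary is in fact \emph{precarious}.

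First, I would compute $x_j(U)$ for each $j\in\mathbf{n}$. For $j$ with no adversaries, condition 1 of the balanced equilibrium gives $u_{jj}=p_j$, so $\sigma_j(U)\ge p_j$ and $\tau_j(U)=0$, placing $j$ in a safe or (if $p_j=0$) precarious state. For $j$ with adversaries, conditions 2 and 3 jointly give $\tau_j(U)=\sum_{k\in\scr{A}_j}u_{kj}=\sum_{k\in\scr{A}_j}u_{jk}=p_j$ and $\sigma_j(U)=\sum_{k\in\scr{F}_j}u_{kj}+p_j$. The crucial step is to verify that $\sum_{k\in\scr{F}_j}u_{kj}=0$: a friend $k$ of $j$ with no adversary satisfies $u_{kj}=0$ by condition 1 (it self-allocates all of $p_k$), and a friend $k$ with adversaries satisfies $u_{kj}=0$ by condition 2 since its entire power is committed to attacking its adversaries. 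Hence $\sigma_j(U)=\tau_j(U)=p_j$, and $j$ is precarious.

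Next, I would examine a deviation $U'=U+e_i d_i$. Only row $i$ is altered, so the state of any country outside $\scr{F}_i\cup\scr{A}_i$ is unaffected by $d_i$. Applying the sufficient conditions for $U'\preceq U$: for each $j\in\scr{F}_i$ (including $j=i$), the first disjunct $x_j(U)\in\{\text{safe},\text{precarious}\}$ holds by the computation above, so condition 1 is automatically satisfied; and for each $j\in\scr{A}_i$, country $j$ has at least the adversary $i$, so $x_j(U)=\text{precarious}$, which satisfies the first disjunct of condition 2. Hence $U'\preceq U$ for every $i\in\mathbf{n}$ and every admissible $d_i\in\mathcal{D}_i(U)$, establishing that $U$ is a pure strategy Nash equilibrium.

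The main obstacle I anticipate is the middle step, namely proving that in a balanced equilibrium no friend ever allocates positive power to another friend, which is what forces every country with adversaries to be precarious rather than safe. Once this structural fact is in hand, the two sufficient conditions for $U'\preceq U$ hold regardless of what the deviation attempts, and the Nash property follows without any case analysis of how $d_i$ alters the $\sigma$ or $\tau$ values of the affected countries.
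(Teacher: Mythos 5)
Your proof is correct and follows essentially the same route as the paper's: determine each country's state under the balanced equilibrium (every country is safe or precarious, and every country with an adversary is precarious) and then check directly that the sufficient conditions for $U+e_id_i\preceq U$ hold for any deviation. You are in fact more careful than the paper, which asserts the precarious states without verifying the structural fact you isolate --- that in a balanced equilibrium no country allocates positive power to a friend, so $\sigma_j(U)=\tau_j(U)=p_j$ exactly for every $j$ with $\mathcal{A}_j\neq\emptyset$.
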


\noindent{\bf Proof of Theorem \ref{t1}:} In a balanced equilibrium $U^{*}$, no country will have incentives to deviate. 

\begin{enumerate}

\item $\forall i \in \mathbf{n}$ s.t. $\mathcal{A}_{i} \neq \emptyset$, $x_{i}(U^{*}) = \text{precarious}$; $\forall j \in \mathcal{A}_{i}$, $x_{j}(U^{*}) = \text{precarious}$.  Therefore, country $i$ has no power to deviate. 

\item $\forall i \in \mathbf{n}$ s.t. $\mathcal{A}_{i} = \emptyset$, $x_{i}(U^{*}) = \text{safe}$; $\forall j \in \mathcal{F}_{i}$, $x_{j}(U^{*}) = \text{safe}$ or $\text{precarious}$ or $\mathcal{F}_{i} = \emptyset$. Therefore, in any of the cases, country $i$ has no incentives to deviate because it has already attained the best power allocation outcome, as implied by the sufficient conditions on the preference relations on the power allocation matrices. 
\end{enumerate} \qed


A balanced equilibrium's realistic implication lies in first illustrating a possible situation multiple parties in conflicts may arrive at.  None has enough power preponderance over others to avoid the precarious state as predicted in the equilibrium. No one will deviate from this equilibrium unless there is a change to their power condition.  

Second, it provides the theoretical basis for a particular kind of military strategy; the Chinese proverb for the military strategy is
``{\em yu bang xiang zheng, yu weng de li}'',
and the English counterpart is ``{\em when shepherds quarrel, the wolf has a winning game} ''.  In other words, by the strategy, the third party can take advantage of others' internal conflicts and thereby achieve its own goals. For example, Catherine the Great hoped to minimize the potential threats to her empire from Austria and Prussia by entangling these two countries in conflicts with France --  as she told her secretary in November 1791, ``I am racking my brains in order to push the courts of Vienna and Berlin into French affairs...there are reasons i cannot talk about; I want to get them involved in that business to have my hands free. I have much unfinished business, and it's necessary for them to be kept busy and out of my way''. \cite{mearsheimer2001tragedy}  It should be noted that the unique survivor respectively corresponding to equilibrium classes 
$[U_1]_*,[U_2]_*,$ and $[U_3]_*$ in Example 1 survives exactly because the other two countries have exhausted themselves in their conflict.

%
%
%
%
%
%
%



It is possible to develop a test for deciding whether or not a given power allocation game has a balanced equilibrium. Towards this end, note first that in the case when no country has any adversaries \{i.e., when all of the $\mathcal{A}_i$ are empty sets\}, the diagonal matrix $U =$ diagonal $\{p_1,p_2,\ldots, p_n\}$ is a balanced equilibrium and it is the only one; this is a consequence of condition \ref{c1} in the definition of a balanced equilibrium. Now suppose that there is at least one adversarial 
relationship between two countries in which case
 there are $n_a <n$ countries with adversaries and $q>0$ adversary pairs in $\mathcal{R}_{\mathcal{A}}$. These    relationships determine  a $q$-edge  subgraph $\mathbb{A}$ of the unsigned version of the  environment graph $\mathbb{E}$ whose $n_a\leq n$ vertices correspond to the $n_a$ countries with adversaries. Let $j_1,j_2,\ldots, j_{n_a}$ denote the labels in $\mathbb{E}$ of the countries which have adversaries and write $\pi$ for the $n_a$-vector $\pi = $ column $\{p_{i_1},p_{i_2},\ldots, p_{i_{n_a}}\}$. Let  $(i_1,j_1), (i_2,j_2), \ldots (i_q,j_q)$ be an ordering of the edges of $\mathbb{A}$, write  $C = [c_{ik}]_{n_a \times q}$ for  the  incidence matrix of $\mathbb{A}$ which is consistent with this ordering, and let $\beta$ denote the map $\beta:\scr{U} \rightarrow \R^{q}$ for which $U\longmapsto $ column $\{u_{i_1j_1},u_{i_2j_2},\ldots,u_{i_qj_q}\}$. 
 Let $\scr{U}_0$ denote the subset of $\scr{U}$
consisting of all power allocation matrices $U$ for which $u_{ii} = p_i$ for all countries $i$ which have no adversaries. 
The definition of a balanced equilibrium implies that  any given balanced equilibrium $U\in\scr{U}$ must be in $\scr{U}_0$ and that any such  $U$  determines   a nonnegative vector $v_U = \beta(U)$  which satisfies
 $Cv_U = \pi$. 
 Conversely any nonnegative vector $v$ satisfying $Cv=\pi$ uniquely determines a balanced equilibrium $U\in\scr{U}_0$ 
 for which $u_{i_kj_k} = v_k = u_{j_ki_k},\;k\in\{1,2,\ldots,q\}.$
  We are led to the following conclusion.

\begin{proposition} Suppose  $\mathcal{R}_{\mathcal{A}}$ is nonempty and  that there are $n_a >1$ countries with adversaries.
Let $\scr{U}_0$,   $C$, $\beta $,  and $\pi$ be as defined above and write  $\scr{V}_{\rm bal}$ for the   set of nonnegative $n_a$-vectors $v$ such that $\pi = Cv$.  
 Then $U$ is a balanced equilibrium in $\scr{U}$ if and only if 
 $U\in\scr{U}_0$ and
 $\beta(U)\in \scr{V}_{\rm bal}$.

\label{t2}\end{proposition}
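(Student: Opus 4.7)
My plan is to prove the two directions of the iff separately by translating between the three defining conditions~\ref{c1}, \ref{c2}, \ref{c3} of a balanced equilibrium and the algebraic conditions $U\in\mathcal{U}_0$ and $\beta(U)\in\mathcal{V}_{\rm bal}$. All of the bookkeeping needed is already assembled in the paragraph immediately above the statement, so the proof is essentially a verification.

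For the forward direction, I would assume $U$ is a balanced equilibrium. Condition~\ref{c1} immediately gives $U\in\mathcal{U}_0$. Setting $v:=\beta(U)$, nonnegativity of $v$ is inherited from that of $U$. To check $Cv=\pi$, I would fix an arbitrary country $i$ with adversaries; by the definition of the incidence matrix $C$, the quantity $(Cv)_i$ is the sum of $v_k=u_{i_kj_k}$ over all edges $k$ of $\mathbb{A}$ incident to vertex $i$. Condition~\ref{c3} allows me to replace each such $u_{i_kj_k}$ by the common edge value $u_{ij}$ for the adversary pair $(i,j)$, so this sum rewrites as $\sum_{j\in\mathcal{A}_i}u_{ij}$, which by condition~\ref{c2} equals $p_i=\pi_i$.

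For the reverse direction I would follow the constructive recipe sketched just above the proposition. Given any $v\in\mathcal{V}_{\rm bal}$, I build the matrix $U$ by placing $v_k$ at both symmetric entries $(i_k,j_k)$ and $(j_k,i_k)$, assigning $u_{ii}=p_i$ to every country with no adversaries, and zero to every remaining entry. The identity $Cv=\pi$ forces every row of $U$ indexed by a country with adversaries to sum to $p_i$, while the self-allocation takes care of the remaining rows, so $U$ is a valid power-allocation matrix lying in $\mathcal{U}_0$; conditions~\ref{c1}--\ref{c3} then hold by construction, and $\beta(U)=v$. This gives a two-sided inverse to $\beta$ on $\mathcal{V}_{\rm bal}$ and, combined with the forward direction, delivers the iff.

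The one subtle point is that $\beta$ only records one of the two entries $u_{i_kj_k},u_{j_ki_k}$ for each adversary edge, whereas condition~\ref{c3} constrains both. The reverse direction is therefore best understood via the explicit construction rather than by attempting to argue it from an arbitrary $U\in\mathcal{U}_0$ whose $\beta$-image happens to lie in $\mathcal{V}_{\rm bal}$: the sentence just above the proposition makes exactly this point by asserting that each $v\in\mathcal{V}_{\rm bal}$ \emph{uniquely} determines a balanced equilibrium in $\mathcal{U}_0$, so the iff in the statement is really the bijection $\beta$ between balanced equilibria and $\mathcal{V}_{\rm bal}$.
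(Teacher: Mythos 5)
Your proposal is correct and follows essentially the same route as the paper, whose entire justification for the proposition is the bookkeeping paragraph preceding it (balanced equilibrium $\Rightarrow$ $U\in\mathcal{U}_0$ and $Cv_U=\pi$ via conditions \ref{c1}--\ref{c3}, and conversely each $v\in\mathcal{V}_{\rm bal}$ uniquely determines a balanced equilibrium by the symmetric construction); you have simply written out the verification the paper leaves implicit. Your closing caveat --- that the literal ``if'' direction for an arbitrary $U\in\mathcal{U}_0$ with $\beta(U)\in\mathcal{V}_{\rm bal}$ does not follow because $\beta$ records only one entry per adversary edge, so the statement is really the bijection between $\mathcal{V}_{\rm bal}$ and the set of balanced equilibria --- is a legitimate observation about an imprecision the paper glosses over, not a defect in your argument.
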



Deciding whether or not a given matrix $U\in\scr{U}$ is a balanced equilibrium assuming $\scr{R}_{\mathbb{A}}$ is nonempty  thus amounts to checking to see whether or not $U\in\scr{U}_0$ and $\beta(U)\in\scr{V}_{\rm bal}$.
Conversely, any vector $v\in\scr{V}_{\rm bal}$
uniquely determines a balanced equilibrium in $\scr{U}_0$. The problem of finding a vector $v\in\scr{V}_{\rm bal}$ can easily be set up as   a linear programming programming problem. It is clear that  all one needs to do is to minimize the  sum of the $q$  slack variables comprising the $q$-vector $z$ subject to the constraints that $Cv +z = \pi$, $z\geq 0$ and $v\geq 0$. Clearly deciding  whether or not a given power allocation game has 
 a balanced equilibrium can be accomplished  using, for example, the simplex algorithm to determine whether or not the linear programming problem just posed has a solution.

Based on the specific technical structure of balanced equilibrium, the following ``construction lemma'' is presented. Applying it repeatedly, one can start from a simple game that has a balanced equilibrium, and construct a far more complex game while retaining the existence of balanced equilibrium.

\begin{lemma}(\textit{Constructing allocations})
\label{lemma1} Take any $i$, $j \in \mathbf{n}$ of a power allocation game where a balanced equilibrium exists such that $j \in \mathcal{A}_{i}$, and let $\bar{p} = p + \delta(e_{i} + e_{j})$, where $\delta$ is a positive real number. Then the new game also has a balanced equilibrium.  

\end{lemma}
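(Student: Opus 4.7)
The plan is to prove this by explicit construction: I would define a new power allocation matrix $\bar{U}$ for the game with powers $\bar{p}$ by simply augmenting the balanced equilibrium $U^*$ of the original game along the single adversarial edge $(i,j)$, absorbing all of the added $\delta$ of power on each side into the mutual allocation between $i$ and $j$. Concretely, set $\bar{u}_{ij} = u^*_{ij} + \delta$, $\bar{u}_{ji} = u^*_{ji} + \delta$, and $\bar{u}_{kl} = u^*_{kl}$ for every other pair $(k,l)$. Note that this construction is well-defined precisely because $j \in \mathcal{A}_i$ is assumed, so $(i,j) \in \mathcal{R}_{\mathcal{A}}$ in the environment graph $\mathbb{E}$, which is unchanged.

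Next I would verify, item by item, that $\bar{U}$ satisfies the three conditions in the definition of a balanced equilibrium with respect to the new power vector $\bar{p}$. First, since $i$ and $j$ have a nonempty adversary set, any country $k$ with $\mathcal{A}_k = \emptyset$ is distinct from $i$ and $j$, so $\bar{u}_{kk} = u^*_{kk} = p_k = \bar{p}_k$, giving condition \ref{c1}. Second, for condition \ref{c2}, the diagonal entries are untouched so $\bar{u}_{kk} = u^*_{kk} = 0$ whenever $\mathcal{A}_k$ is nonempty; the row-sum constraint is trivially preserved for $k \notin \{i,j\}$, and for $k = i$ one checks $\sum_{l \in \mathcal{A}_i} \bar{u}_{il} = \sum_{l \in \mathcal{A}_i} u^*_{il} + \delta = p_i + \delta = \bar{p}_i$, with an identical check for $k = j$. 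Third, the symmetry condition \ref{c3} on $(i,j)$ holds because $u^*_{ij} = u^*_{ji}$ forces $\bar{u}_{ij} = u^*_{ij} + \delta = u^*_{ji} + \delta = \bar{u}_{ji}$, while for every other adversarial pair the allocations coincide with those of $U^*$ and therefore remain symmetric.

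Finally, since $\delta > 0$ and $u^*_{ij}, u^*_{ji} \geq 0$, the entries of $\bar{U}$ are nonnegative, so $\bar{U}$ lies in the strategy set $\scr{U}$ of the new game and is a balanced equilibrium of that game, proving the lemma. I do not anticipate a genuine obstacle here; the main observation is that in a balanced equilibrium all power is routed along adversarial edges (by conditions \ref{c1}--\ref{c3}), so an increment of $\delta$ in the total powers of two adversaries can always be absorbed along the single edge joining them without disturbing any other balance.
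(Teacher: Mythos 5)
Your construction $\bar{U} = U^* + \delta(e_i e_j^{\top} + e_j e_i^{\top})$ is exactly the one the paper uses, and your item-by-item verification of the three defining conditions of a balanced equilibrium is a correct (and somewhat more detailed) version of the paper's own argument. No issues; the proposal matches the paper's proof in both approach and substance.
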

\begin{proof}
Suppose $U$ is a balanced equilibrium of the original power allocation game. Take $(i,j) \in \mathcal{R}_{\mathcal{A}}$, and let $\bar{U} = U + \delta(e_{i}e_{j}^{\top} + e_{j}e_{i}^{\top})$. $\bar{U}$ is a valid power allocation matrix of the new game. 

Obviously, $\bar{U}$ is symmetric. $\forall i ~s.t.~ \mathcal{A}_{i} \neq \emptyset$, $\bar{u}_{ii} = 0$.  No one will deviate, and therefore, it is a balanced equilibrium of the new game. \end{proof}

Lemma~\ref{lemma1} implies that the two games are only different by the power of two countries $i$ and $j$ which have an adversary relation. Lemma~\ref{lemma10} constructs another new node (and its relation) for the original game and a new balanced equilibrium for the new game. 

\begin{lemma}(\textit{Constructing nodes and allocations})
\label{lemma10} Construct a node $n+1 \notin \mathbf{n}$ of a power allocation game where a balanced equilibrium exists. Let $p_{n+1} = 0$, $(i, n+1) \in \mathcal{R}_{\mathcal{A}}$ for one country $i \in \mathbf{n}$. Then the game after incorporating $v_{n+1}$ also has a balanced equilibrium.

\end{lemma}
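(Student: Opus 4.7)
The plan is to construct the balanced equilibrium for the new game explicitly by padding the given balanced equilibrium $U$ of the original game with a row and column of zeros. Specifically, define the $(n+1)\times(n+1)$ matrix $\bar{U}$ by $\bar{u}_{kl} = u_{kl}$ for $k,l \in \mathbf{n}$, and $\bar{u}_{k,n+1} = \bar{u}_{n+1,k} = 0$ for all $k \in \mathbf{n} \cup \{n+1\}$. Then verify the three conditions of a balanced equilibrium from Section~\ref{bal} for $\bar{U}$ in the new game.

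The verification proceeds case by case. First, $\bar{U}$ is a valid power allocation matrix of the new game: the row sums for $k \in \mathbf{n} \setminus \{i\}$ are unchanged from $U$ (so they equal $p_k = \bar{p}_k$), the row sum for country $i$ is $\sum_{l \in \mathbf{n}} u_{il} + 0 = p_i = \bar{p}_i$, and the row sum for $n+1$ is $0 = p_{n+1}$. Second, for condition (1) of the definition, any country $k \in \mathbf{n}$ with empty adversary set in the new game also has empty adversary set in the original game (since the only adversarial edge added is $(i, n+1)$), so $\bar{u}_{kk} = u_{kk} = p_k$. Third, for condition (2), the countries with nonempty adversary sets in the new game are those with nonempty adversary sets in the original game, together with country $n+1$; for original such countries $k \neq i$ nothing has changed, for country $i$ we have $\bar{u}_{ii} = 0$ and $\sum_{j \in \bar{\mathcal{A}}_i}\bar{u}_{ij} = \sum_{j \in \mathcal{A}_i} u_{ij} + \bar{u}_{i,n+1} = p_i + 0 = p_i$, and for country $n+1$ we have $\bar{u}_{n+1,n+1} = 0$ and $\bar{u}_{n+1,i} = 0 = p_{n+1}$. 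Finally, for condition (3), the symmetry condition holds on the original adversary pairs since $U$ was a balanced equilibrium, and on the new adversary pair $(i,n+1)$ because $\bar{u}_{i,n+1} = 0 = \bar{u}_{n+1,i}$.

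Having exhibited a matrix satisfying all three conditions of a balanced equilibrium, the conclusion then follows immediately from Theorem~\ref{t1}, which guarantees that any balanced equilibrium is a pure strategy Nash equilibrium of the corresponding power allocation game. I do not anticipate a significant obstacle here: the key observation that makes the construction work is that $p_{n+1} = 0$ forces the new country's row to be zero, and the symmetry condition then forces the corresponding column to be zero as well, so the extension can be made without perturbing any of the existing allocation values. If one wanted to generalize this lemma to allow $p_{n+1} > 0$, one would need to redistribute power across the adversarial edge $(i, n+1)$, which would make the lemma a consequence of Lemma~\ref{lemma1} applied after this zero-power node construction.
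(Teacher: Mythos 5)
Your construction is exactly the paper's: the paper simply writes $\bar{U} = \left[\begin{smallmatrix} U & 0 \\ 0 & 0 \end{smallmatrix}\right]$ and declares it a balanced equilibrium of the new game, while you supply the condition-by-condition verification that the paper omits. One remark that applies equally to your write-up and to the paper's: the step ``for country $i$ we have $\bar{u}_{ii}=0$'' silently assumes $\mathcal{A}_i\neq\emptyset$ in the original game (otherwise $u_{ii}=p_i$, and if $p_i>0$ the new game violates the necessary power condition of Theorem~\ref{power}, so the lemma needs that hypothesis); aside from inheriting this shared implicit assumption, your proof is correct and takes essentially the same route.
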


\begin{proof}
Suppose $U$ is a balanced equilibrium of the original game. Then $\bar{U} = \begin{bmatrix}
U &  0\\
0 &  0
\end{bmatrix}$ is the balanced equilibrium of the new game. \end{proof}

\begin{tikzpicture}[x=4em,y=-4em][scale=0.25]
	\drawnodex{-1,-1/2}{v1}{left}{v1}{above}{$8$}
	\drawnodex{-1,1/2}{v2}{left}{v2}{below}{$6$}
	\drawnodex{0,0}{v3}{below}{v3}{above}{$4$}
	\drawnodex{1,0}{v4}{below}{v4}{above}{$0$}
			
    \drawfoe{v1}{v3}
	\drawfoe{v2}{v3}
	\drawfoe{v1}{v2}
    \drawfoe{v3}{v4}

\end{tikzpicture}
\qquad
\begin{tikzpicture}[x=4em,y=-4em][scale=0.25]
	\drawnodex{-1,-1/2}{v1}{left}{v1}{above}{$0$}
	\drawnodex{-1,1/2}{v2}{left}{v2}{below}{$0$}
	\drawnodex{0,0}{v3}{below}{v3}{above}{$0$}
	\drawnodex{1,0}{v4}{below}{v4}{above}{$0$}
			
    \drawfoex{v1}{v3}{$3$}{$3$}
	\drawfoex{v2}{v3}{$1$}{$1$}
	\drawfoex{v1}{v2}{$5$}{$5$}
    \drawfoex{v3}{v4}{$0$}{$0$}
\end{tikzpicture}

\begin{tikzpicture}[x=4em,y=-4em][scale=0.25]
	\drawnodex{-1,-1/2}{v1}{left}{v1}{above}{$8$}
	\drawnodex{-1,1/2}{v2}{left}{v2}{below}{$6$}
	\drawnodex{0,0}{v3}{below}{v3}{above}{$6$}
	\drawnodex{1,0}{v4}{below}{v4}{above}{$2$}
			
    \drawfoe{v1}{v3}
	\drawfoe{v2}{v3}
	\drawfoe{v1}{v2}
    \drawfoe{v3}{v4}
\end{tikzpicture}
\qquad
\begin{tikzpicture}[x=4em,y=-4em]
	\drawnodex{-1,-1/2}{v1}{left}{v1}{above}{$0$}
	\drawnodex{-1,1/2}{v2}{left}{v2}{below}{$0$}
	\drawnodex{0,0}{v3}{below}{v3}{above}{$0$}
	\drawnodex{1,0}{v4}{below}{v4}{above}{$0$}
			
    \drawfoex{v1}{v3}{$3$}{$3$}
	\drawfoex{v2}{v3}{$1$}{$1$}
	\drawfoex{v1}{v2}{$5$}{$5$}
    \drawfoex{v3}{v4}{$2$}{$2$}
\end{tikzpicture}

The above figure illustrates the process of applying Lemma~\ref{lemma10} to construct a balanced equilibrium for a four-player game:
\begin{enumerate} 
\item Add a country with zero power, country 4, to the original game. 
\item Add $2$ to both $p_{3}$ and $p_{4}$ and obtain the new game. 
\item Construct the balanced equilibriums before and after the change. 
\end{enumerate}

A necessary power condition for countries is now stated for the existence of balanced equilibrium in environments of any relation configuration. The intuition is that if there exists a country whose power does not satisfy the below condition, this country will never exhaust its power by allocating to its adversaries. Theorem 3 (whose proof requires Theorem 2) shows that this is actually the necessary and sufficient condition for countries' power for a balanced equilibrium to exist in a PAG where the adversary pairs make up a complete graph. 


\begin{theorem}(\textit{Necessary power condition})
\label{power} In a power allocation game, the necessary condition for the existence of a balanced equilibrium is that $\forall i \in \mathbf{n}$ s.t. $\mathcal{A}_{i} \neq \emptyset$, $\sum_{j \in \mathcal{A}_{i}}p_{j} \geq p_{i}$.
\end{theorem}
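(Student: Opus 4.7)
The plan is a direct argument using the three defining conditions of a balanced equilibrium, applied to an arbitrary country $i$ with $\mathcal{A}_i \neq \emptyset$. Assume $U$ is a balanced equilibrium and fix such an $i$. The goal is to bound the total power $p_i$ from above by $\sum_{j \in \mathcal{A}_i} p_j$ using the allocations in $U$ as intermediaries.

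First, I would invoke condition (\ref{c2}) to rewrite $p_i$ as $\sum_{j \in \mathcal{A}_i} u_{ij}$, since every unit of $i$'s power is allocated against an adversary. Next, I would apply condition (\ref{c3}) to each term individually: for every adversary pair $(i,j)$, $u_{ij} = u_{ji}$, so
\eq{p_i \;=\; \sum_{j \in \mathcal{A}_i} u_{ij} \;=\; \sum_{j \in \mathcal{A}_i} u_{ji}.}

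Then I would bound each $u_{ji}$ by $p_j$. This follows because $j \in \mathcal{A}_i$ means $\mathcal{A}_j$ is nonempty (it contains $i$), so by condition (\ref{c2}) applied to country $j$, $\sum_{k \in \mathcal{A}_j} u_{jk} = p_j$. Since all entries of $U$ are nonnegative and $i \in \mathcal{A}_j$, the single term $u_{ji}$ is at most $p_j$. Summing over $j \in \mathcal{A}_i$ yields $p_i \leq \sum_{j \in \mathcal{A}_i} p_j$, which is the claimed necessary condition.

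There is no substantive obstacle here; the argument is a one-pass manipulation that reads off the conclusion from the defining conditions of balanced equilibrium. The only point to be careful about is using the right condition for the right country: condition (\ref{c2}) must be invoked twice, once for $i$ to express $p_i$ as a sum of allocations, and once for each adversary $j$ to control $u_{ji}$ by $p_j$. The symmetry condition (\ref{c3}) is what links these two uses together.
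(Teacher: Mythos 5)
Your proposal is correct and follows essentially the same route as the paper's own proof: express $p_i$ as $\sum_{j\in\mathcal{A}_i}u_{ij}$ via condition (2), convert to $\sum_{j\in\mathcal{A}_i}u_{ji}$ via the symmetry condition (3), and bound each $u_{ji}$ by $p_j$ using the row-sum (total power) constraint. The only cosmetic difference is that the paper cites the generic power constraint for the bound $u_{ji}\leq p_j$ while you invoke condition (2) for country $j$; both amount to the same nonnegativity-plus-row-sum argument.
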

\begin{proof}
By the definition of balanced equilibrium, $\forall i \in \mathbf{n}$ s.t. $\mathcal{A}_{i} \neq \emptyset$, $u_{ii} = 0$ and $u_{ji} = u_{ij}$. And by the power constraint, $\forall j \in \mathcal{A}_{i}$, $u_{ji} \leq p_{j}$. 

Therefore, it is shown that \[
\sum_{j \in \mathcal{A}_{i}}p_{j} \geq \sum_{j \in \mathcal{A}_{i}}u_{ji} = \sum_{j \in \mathcal{A}_{i}}u_{ij} = \sum_{j \in \mathcal{A}_{i}}u_{ij} + u_{ii} = p_{i}
\]
\end{proof}

\begin{lemma}(\textit{Unique Balanced Equilibrium with Three-Player Complete \text{adversary} Graph})
\label{lemma2}
In a power allocation game, there are three countries with \text{adversary} relations. If the \text{adversary} relations make up a complete graph and if the total power condition in theorem \ref{power} holds for them, a unique balanced equilibrium exists.
\end{lemma}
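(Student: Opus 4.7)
\noindent\textbf{Proof proposal for Lemma~\ref{lemma2}.}
The plan is to specialize Proposition~\ref{t2} to the case where $\mathbb{A}$ is the triangle $K_3$ on the three adversarial countries and to solve the resulting $3\times 3$ linear system explicitly. Labeling the three countries $1,2,3$ with powers $p_1,p_2,p_3$ and writing $a = u_{12}=u_{21}$, $b = u_{13}=u_{31}$, $c = u_{23}=u_{32}$, the defining conditions of a balanced equilibrium reduce to
\begin{equation*}
a+b = p_1,\qquad a+c = p_2,\qquad b+c = p_3.
\end{equation*}

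First I would observe that this system is precisely $Cv = \pi$ with $v = \mathrm{col}\{a,b,c\}$ and $C$ the vertex-edge incidence matrix of $K_3$. Since $\det C = -2 \neq 0$, the system admits exactly one real solution,
\begin{equation*}
a = \tfrac{1}{2}(p_1+p_2-p_3),\quad b = \tfrac{1}{2}(p_1-p_2+p_3),\quad c = \tfrac{1}{2}(-p_1+p_2+p_3).
\end{equation*}
Thus by Proposition~\ref{t2}, at most one balanced equilibrium can exist, establishing uniqueness.

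Next I would check that the hypothesis $\sum_{j\in\mathcal{A}_i}p_j \geq p_i$ for every $i$ is precisely what is required to make $(a,b,c)$ nonnegative. Indeed the three inequalities $p_2+p_3\geq p_1$, $p_1+p_3\geq p_2$, and $p_1+p_2\geq p_3$ give $c\geq 0$, $b\geq 0$, and $a\geq 0$ respectively. Hence $v\in \scr{V}_{\rm bal}$, and the corresponding matrix $U\in\scr{U}_0$ obtained by setting $u_{ij}=v_k$ for each adversary edge $(i,j)$ and $u_{ii}=0$ (noting that here every country has an adversary, so $\scr{U}_0$ imposes no further constraints) is, by Proposition~\ref{t2}, the unique balanced equilibrium.

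There is essentially no difficult step here: once the problem is reduced to $Cv=\pi$ on $K_3$, it is a straightforward linear-algebra computation. The only subtlety to flag is confirming that the power condition of Theorem~\ref{power} coincides exactly with nonnegativity of the unique solution; this is immediate because for three mutual adversaries $\mathcal{A}_i$ consists of the other two countries, so Theorem~\ref{power}'s inequality reads $p_j+p_k\geq p_i$ for each cyclic permutation, matching the three triangle inequalities obtained above.
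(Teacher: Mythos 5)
Your proof is correct and follows essentially the same route as the paper's: the paper simply writes down the same explicit allocation $u_{12}=u_{21}=\tfrac{1}{2}(p_1+p_2-p_3)$, etc., asserts it is a balanced equilibrium, and asserts uniqueness, whereas you derive the formulas from Proposition~\ref{t2} and justify uniqueness via the nonsingularity of the $K_3$ incidence matrix ($\det C=-2$), which is a welcome tightening of a step the paper leaves implicit. The only caveat is your parenthetical claim that every country has an adversary --- the lemma permits additional adversary-free countries in the game, but their rows are forced to $u_{ii}=p_i$ by membership in $\scr{U}_0$, so uniqueness is unaffected.
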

\begin{proof}
A unique balanced equilibrium, $U$, can be constructed. For the three countries country 1, country 2 and country 3 making up a complete graph, their mutual allocations and reserved power are 
\begin{itemize}
\item $[u_{11} \quad u_{12} \quad u_{13}] = [0 \quad \cfrac{p_{1}+p_{2}-p_{3}}{2} \quad \cfrac{p_{1}+p_{3}-p_{2}}{2}]$
\item $[u_{21} \quad u_{22} \quad u_{23}] = [\cfrac{p_{1}+p_{2}-p_{3}}{2} \quad 0 \quad \cfrac{p_{2}+p_{3}-p_{1}}{2}]$
\item $[u_{31} \quad u_{32}\quad u_{33}] = [\cfrac{p_{1}+p_{3}-p_{2}}{2} \quad \cfrac{p_{2}+p_{3}-p_{1}}{2} \quad 0]$

\end{itemize} 
The three invests 0 on the other countries. Those without a \text{adversary} relation invest all their total capacities as their self-defense.  Such a $U$ is a balanced equilibrium, which is also unique.

\end{proof}

\begin{theorem}(\textit{Balanced Equilibrium in Games with Complete \text{adversary} Graph})
\label{prop/complete-balanced equilibrium} In a power allocation game where the \text{adversary} pairs make up a complete subgraph of $\mathbb{E}$, the game has a balanced equilibrium if and only if the necessary total power condition in Theorem~\ref{power} holds for these countries.
\end{theorem}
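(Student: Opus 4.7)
\medskip\noindent
The ``only if'' direction is immediate, since Theorem~\ref{power} has already established the power condition as necessary for \emph{any} balanced equilibrium. For the converse, the plan is to invoke Proposition~\ref{t2}: because the adversary pairs form a complete graph on the $n_a$ countries that have adversaries, the matrix $C$ appearing there is the vertex-edge incidence matrix of $K_{n_a}$, and producing a balanced equilibrium reduces to showing that $\scr{V}_{\rm bal}$ is non-empty, i.e., that the linear system $Cv = \pi$, $v \geq 0$ is feasible, where $\pi$ is the vector of the $n_a$ countries' total powers.

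My approach to this feasibility question is Farkas' lemma. Infeasibility would produce a vector $y \in \R^{n_a}$ with $C^{\top} y \geq 0$ and $\pi^{\top} y < 0$, and since every column of $C$ is an edge indicator with two entries equal to one, the first condition becomes $y_i + y_j \geq 0$ for every pair $i \neq j$. I would then rule this out by a three-way case split on the signs of the components of $y$:
\begin{enumerate}
\item If $y \geq 0$ componentwise, then $\pi^{\top} y \geq 0$ at once, since $\pi \geq 0$.
\item If two or more components of $y$ are strictly negative, then those two components immediately violate the pairwise constraint $y_i + y_j \geq 0$.
\item If exactly one component is negative, say $y_1 < 0$, then $y_1 + y_j \geq 0$ forces $y_j \geq -y_1 > 0$ for each $j \neq 1$, and hence
\begin{equation*}
\pi^{\top} y \;\geq\; p_1 y_1 \;-\; y_1 \sum_{j \neq 1} p_j \;=\; y_1\Bigl(p_1 - \sum_{j \neq 1} p_j\Bigr),
\end{equation*}
which is nonnegative because $y_1 < 0$ while the hypothesized condition $p_1 \leq \sum_{j \neq 1} p_j$ makes the second factor nonpositive.
\end{enumerate}
In every case $\pi^{\top} y \geq 0$, contradicting $\pi^{\top} y < 0$, so the linear system is feasible and Proposition~\ref{t2} then delivers the balanced equilibrium.

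The main obstacle is the one-negative-component case; it is the only step that actually uses the stated power inequality, and the key observation is that tightening every other $y_j$ down to its lower bound $-y_1$ reduces $\pi^{\top} y \geq 0$ to precisely Theorem~\ref{power}'s condition at $i = 1$. As a constructive alternative, one could induct on $n_a$ with Lemma~\ref{lemma2} as the $n_a = 3$ base case, at each step allocating the smallest country's total power among the remaining countries so that the reduced $(n_a - 1)$-country game still satisfies the power condition; verifying that such a splitting always exists amounts to a bounded-variable transportation feasibility check that looks messier than the Farkas argument above.
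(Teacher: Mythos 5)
Your proposal is correct, but it takes a genuinely different route from the paper. The paper proves the sufficiency direction by induction on the number of countries in the complete adversary graph: Lemma~\ref{lemma2} supplies an explicit closed-form equilibrium for the three-country base case, and the induction step sorts the powers, subtracts the smallest power $p_{k+1}$ from the largest, verifies that the power condition survives this reduction, invokes the hypothesis on the resulting $k$-country game, and then restores country $k+1$ by placing the allocation $u_{1,k+1}=u_{k+1,1}=p_{k+1}$ on the new edge (essentially running Lemmas~\ref{lemma1} and~\ref{lemma10} in reverse). This is exactly the ``constructive alternative'' you sketch and dismiss at the end as messier --- a fair assessment, since the paper must separately re-verify the power condition for each of $p_1, p_2, \dots$ after the subtraction. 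Your route instead passes through Proposition~\ref{t2}, reducing existence to feasibility of $Cv=\pi$, $v\geq 0$ for the incidence matrix of $K_{n_a}$, and certifies feasibility by Farkas' lemma; the three-way sign split on $y$ is complete, the pairwise constraints $y_i+y_j\geq 0$ correctly kill the case of two negative components, and the one-negative-component case is precisely where the hypothesis $p_1\leq\sum_{j\neq 1}p_j$ enters (the relabeling is harmless since the condition holds for every country). What you lose relative to the paper is an explicit equilibrium and the reusable construction lemmas; what you gain is a shorter, self-contained argument that also handles $n_a=2$ uniformly, whereas the paper's induction is anchored only at $k=3$.
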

%
%
%

\noindent \begin{proof} The proof is by induction. 

The Base Case: As proven in Lemma~\ref{lemma2}, a game where the \text{adversary} relations make up a complete 3-country graph has a balanced equilibrium if the total power condition holds for them.

The Induction Hypothesis: a game where the \text{adversary} relations make up a complete k-country graph has a balanced equilibrium if the total power condition holds for them ($k \in \mathbb{Z}$ and $k > 3$).

The Induction Step: Prove that the theorem holds with $k+1$ countries using the assumption above. 

Sort the capacities of the $k+1$ countries by a nonincreasing order: $p_{1} \geq p_{2} ... \geq p_{k+1}$. By the above assumption, the total power condition holds. 


Subtract $p_{1}$ by $p_{k+1}$ and $p_{k+1}$ by $p_{k+1}$. Resort the capacities by a decreasing order: $\bar{p}_{1} \geq \bar{p}_{2} ... \geq \bar{p}_{k} \geq 0$.

The total power condition holds for each of the $k+1$ country. Before the change, $p_{1} \leq p_{2} + p_{3} + ... + p_{k+1}$. After the change, $p_{1} - p_{k+1} \leq p_{2} + p_{3} + ... + p_{k}$. The total power condition still holds for $p_{1}$. 

Since $p_{2} \leq p_{1}$ and $p_{k+1} \leq p_{k}$, it can be proven that $p_{2} \leq (p_{1} - p_{k+1}) + p_{3} + ... + p_{k}$. Then for $p_{2}$, the total power condition still holds. Similarly, the total power condition holds for the rest of the countries. 

By the induction hypothesis, a balanced equilibrium $U$ exists for the game, which has the above $k$ countries with capacities $\bar{p}_{1}$, $\bar{p}_{2}$, ..., and $\bar{p}_{k}$. 

Now another balanced equilibrium, $\bar{U}$ is obtained with the following steps. 
\begin{itemize}

\item Insert a new row after the $k$-th row and a new column after the $k$-th column in $U$, which was originally a $(n-1) \times (n-1)$ matrix after deleting the $(k+1)$-th country, to restore it as a $n \times n$ matrix and to represent the allocations by and towards country $k+1$. 
\item Initialize the elements in the new row and new column as 0. 
\item Then add back the subtracted power $p_{k+1}$ by updating $u_{1,k+1}$ and $u_{k+1,1}$ as $p_{k+1}$. 

\end{itemize}

By the construction lemma, $\bar{U}$ is a balanced equilibrium. Hence, any game where the \text{adversary} relations make up a complete graph has a balanced equilibrium if and only if the total power condition holds.\end{proof}

The below figure illustrates the induction step with a simple case --- at the end of the process, a balanced equilibrium for the original game is constructed.

\begin{tikzpicture}[x=2.2em,y=-2.2em]
	\begin{scope}[xshift=0em,yshift=0em]
		\drawnodex{0,0}{v1}{left}{v1}{above}{$8$}
		\drawnodex{2,0}{v2}{right}{v2}{above}{$2$}
		\drawnodex{0,2}{v3}{left}{v3}{below}{$6$}
		\drawnodex{2,2}{v4}{right}{v4}{below}{$2$}
		
		\drawfoe{v1}{v2}
		\drawfoe{v1}{v3}
		\drawfoe{v1}{v4}
		\drawfoe{v2}{v3}
		\drawfoe{v2}{v4}
		\drawfoe{v3}{v4}
	\end{scope}
	\draw[-implies,double,double equal sign distance,ultra thick] (2.5,1) -- +(1,0);
	\draw[-implies,double,double equal sign distance,dashed] (1,2.5) -- +(0,1);
	
	\begin{scope}[xshift=8.5em,yshift=0em]
		\drawnodex{0,0}{v1}{left}{v1}{above}{$6$}
		\drawnodex{2,0}{v2}{right}{v2}{above}{$2$}
		\drawnodex{0,2}{v3}{left}{v3}{below}{$6$}
		\drawnodex{2,2}{v4}{right}{v4}{below}{$0$}
		
		\drawfoe{v1}{v2}
		\drawfoe{v1}{v3}
		\drawfoe{v1}{v4}
		\drawfoe{v2}{v3}
		\drawfoe{v2}{v4}
		\drawfoe{v3}{v4}
	\end{scope}
	\draw[-implies,double,double equal sign distance,ultra thick] (6.5,1) -- +(1,0);
	\draw[-implies,double,double equal sign distance,dashed] (5,2.5) -- +(0,1);
	
	\begin{scope}[xshift=17em,yshift=0em]
		\drawnodex{0,0}{v1}{left}{v1}{above}{$6$}
		\drawnodex{2,0}{v2}{below}{v2}{above}{$2$}
		\drawnodex{0,2}{v3}{left}{v3}{below}{$6$}
		
		\drawfoe{v1}{v2}
		\drawfoe{v1}{v3}
		\drawfoe{v2}{v3}
	\end{scope}
	\draw[-implies,double,double equal sign distance,ultra thick] (9,2.5) -- +(0,1);
	
	\begin{scope}[xshift=17em,yshift=-10em]
		\drawnodex{0,0}{v1}{left}{v1}{above}{$0$}
		\drawnodex{2,0}{v2}{below}{v2}{above}{$0$}
		\drawnodex{0,2}{v3}{left}{v3}{below}{$0$}
		
		\drawfoex{v1}{v2}{1}{1}
		\drawfoex{v1}{v3}{5}{5}
		\drawfoex{v2}{v3}{1}{1}
	\end{scope}
	\draw[implies-,double,double equal sign distance,ultra thick] (6.5,5) -- +(1,0);
	
	\begin{scope}[xshift=8.5em,yshift=-10em]
		\drawnodex{0,0}{v1}{left}{v1}{above}{$0$}
		\drawnodex{2,0}{v2}{right}{v2}{above}{$0$}
		\drawnodex{0,2}{v3}{left}{v3}{below}{$0$}
		\drawnodex{2,2}{v4}{right}{v4}{below}{$0$}
		
		\drawfoex{v1}{v2}{1}{1}
		\drawfoex{v1}{v3}{5}{5}
		\drawfoex[0.3]{v1}{v4}{0}{0}
		\drawfoex[0.3]{v2}{v3}{1}{1}
		\drawfoex{v2}{v4}{0}{0}
		\drawfoex{v3}{v4}{0}{0}
	\end{scope}
	\draw[implies-,double,double equal sign distance,ultra thick] (2.5,5) -- +(1,0);
	
	\begin{scope}[xshift=0em,yshift=-10em]
		\drawnodex{0,0}{v1}{left}{v1}{above}{$0$}
		\drawnodex{2,0}{v2}{right}{v2}{above}{$0$}
		\drawnodex{0,2}{v3}{left}{v3}{below}{$0$}
		\drawnodex{2,2}{v4}{right}{v4}{below}{$0$}
		
		\drawfoex{v1}{v2}{1}{1}
		\drawfoex{v1}{v3}{5}{5}
		\drawfoex[0.3]{v1}{v4}{2}{2}
		\drawfoex[0.3]{v2}{v3}{1}{1}
		\drawfoex{v2}{v4}{0}{0}
		\drawfoex{v3}{v4}{0}{0}
	\end{scope}
\end{tikzpicture}

Theorem 5 discusses an alternative set of countries' power condition for balanced equilibrium to exist in environments where the adversary pairs make up a bipartite graph,  termed as \emph{extended power condition}. This kinds of graphs can also be called \emph{structurally balanced graphs\cite{cartwright1956structural}}.  

Moreover, this extended power condition takes a similar form with the neighbor set condition in the Hall Maximum Matching theorem in Theorem 5. 


\begin{theorem}(\textit{Hall's Maximum Matching Theorem\cite{hall1935representatives}}) Let $\mathbb{G} = (\mathcal{V}, \mathcal{E})$ be a finite bipartite graph where $\mathcal{V} = \mathcal{L} \cup \mathcal{R}$ with $\mathcal{L} \cap \mathcal{R} = \emptyset$. $\mathbb{G}$ has a maximum matching if and only if that for all subsets $\mathcal{S} \in \mathcal{L}$, we have $|\gamma(\mathcal{S})| \geq \mathcal{S}|$ ($\gamma(\mathcal{S}) = \{ j \in \mathcal{R} | (i,j) \in \mathcal{E} ~\text{for some}~ i \in \mathcal{L}\}$), and vice versa. 

\end{theorem}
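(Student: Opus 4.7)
The plan is to treat the two directions of the equivalence separately. Necessity is a direct counting argument: given a matching $M$ saturating $\mathcal{L}$, the map sending each $i \in \mathcal{S}$ to its $M$-partner is an injection from $\mathcal{S}$ into $\gamma(\mathcal{S})$, so $|\gamma(\mathcal{S})| \geq |\mathcal{S}|$ follows immediately. The substantive direction is sufficiency, which I would prove by induction on $n = |\mathcal{L}|$.

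The base case $n = 1$ is trivial, since Hall's condition guarantees at least one neighbor to match to. For the inductive step, I split into two cases based on whether Hall's condition is tight on some proper nonempty subset of $\mathcal{L}$.

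In the easy case, every nonempty $\mathcal{S} \subsetneq \mathcal{L}$ satisfies the strict inequality $|\gamma(\mathcal{S})| \geq |\mathcal{S}| + 1$. I would pick any $v \in \mathcal{L}$ and any neighbor $w \in \gamma(\{v\})$, match them, and delete both from the graph. For any $\mathcal{T} \subseteq \mathcal{L}\setminus\{v\}$, the reduced neighborhood loses at most the single vertex $w$, so its size is at least $|\gamma(\mathcal{T})| - 1 \geq |\mathcal{T}|$. The induction hypothesis then yields a matching saturating $\mathcal{L}\setminus\{v\}$, and appending the edge $(v,w)$ completes the construction.

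In the hard case, there is a nonempty proper $\mathcal{S} \subsetneq \mathcal{L}$ with $|\gamma(\mathcal{S})| = |\mathcal{S}|$. Applying the induction hypothesis to the bipartite subgraph on $\mathcal{S} \cup \gamma(\mathcal{S})$ (Hall's condition is inherited automatically on subsets of $\mathcal{S}$) produces a matching $M_1$ saturating $\mathcal{S}$. Then I pass to the remaining bipartite graph on $(\mathcal{L}\setminus\mathcal{S}) \cup (\mathcal{R}\setminus\gamma(\mathcal{S}))$ and verify Hall's condition there: for any $\mathcal{T} \subseteq \mathcal{L}\setminus\mathcal{S}$, the reduced neighborhood has size $|\gamma(\mathcal{S}\cup\mathcal{T})| - |\gamma(\mathcal{S})| \geq |\mathcal{S}\cup\mathcal{T}| - |\mathcal{S}| = |\mathcal{T}|$. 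A second application of the inductive hypothesis delivers $M_2$, and $M_1 \cup M_2$ is the desired matching. The main obstacle is precisely this verification in the hard case, which hinges on the original Hall condition applied to the superset $\mathcal{S}\cup\mathcal{T}$ and on the decomposition $\gamma(\mathcal{S}\cup\mathcal{T}) = \gamma(\mathcal{S}) \cup (\gamma(\mathcal{T})\setminus\gamma(\mathcal{S}))$.
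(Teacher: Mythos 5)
The paper does not prove this statement at all: it is quoted as a classical result, cited to Hall's 1935 paper, and used only as an analogy for the extended power condition in the theorem on structurally balanced graphs. So there is no in-paper proof to compare against. Your argument is the standard Halmos--Vaughan induction proof of Hall's theorem, and it is correct and complete: the necessity direction via the injection $i \mapsto M(i)$ is right, the ``no tight set'' case correctly uses that deleting one vertex of $\mathcal{R}$ drops each neighborhood size by at most one, and the ``tight set'' case correctly verifies the inherited Hall condition on $(\mathcal{L}\setminus\mathcal{S}) \cup (\mathcal{R}\setminus\gamma(\mathcal{S}))$ using $|\gamma(\mathcal{S}\cup\mathcal{T})| \geq |\mathcal{S}| + |\mathcal{T}|$ and $|\gamma(\mathcal{S})| = |\mathcal{S}|$. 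Two small points of reconciliation with the paper's (somewhat loosely typeset) statement: what you prove is the existence of a matching \emph{saturating} $\mathcal{L}$, which is the intended meaning of the paper's ``maximum matching,'' and the paper's ``and vice versa'' refers to the symmetric condition with the roles of $\mathcal{L}$ and $\mathcal{R}$ exchanged, which follows from your argument by relabeling the two sides. Neither affects the validity of your proof.
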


\begin{theorem}(\textit{Balanced Equilibrium in Games on Structurally Balanced Graph})
\label{prop/bipart-balanced equilibrium} A power allocation game on a structurally balanced graph, in other words, in the environment where the adversary pairs make up a bipartite graph, has a balanced equilibrium if and only if the extended power condition holds for the countries with adversaries:
\begin{itemize}
\item $\forall \mathcal{S} \subseteq \mathcal{L}, \sum\limits_{j \in \mathcal{A}_{\mathcal{S}}}p_{j} \geq \sum\limits_{i \in \mathcal{S}}p_{i}$
\item $\forall \mathcal{S} \subseteq \mathcal{R}, \sum\limits_{j \in \mathcal{A}_{\mathcal{S}}}p_{j} \geq \sum\limits_{i \in \mathcal{S}}p_{i}$ ($\mathcal{A}_{\mathcal{S}} = \bigcup\limits_{i \in \mathcal{S}}\mathcal{A}_{i}$)
\end{itemize}
By definition, the two sets of nodes, $\mathcal{L}$ and $\mathcal{R}$, represent the two groups of countries with \text{adversaries} relations, where each country in either set is only connected to countries in the other set. $\mathcal{S}$ is a subset of either set. 
	
\end{theorem}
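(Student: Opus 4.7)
The plan is to invoke Proposition 2 to recast the existence of a balanced equilibrium as the feasibility of a nonnegative vector $v\geq 0$ satisfying $Cv=\pi$, where $C$ is the incidence matrix of the bipartite adversary graph $\mathbb{A}$ with parts $\mathcal{L}$ and $\mathcal{R}$, and then to prove the two directions separately. For the balanced equilibrium, the symmetry $u_{ij}=u_{ji}$ on adversary edges and the exhaustion constraint $\sum_{j\in\mathcal{A}_i}u_{ij}=p_i$ are exactly the flow conservation conditions in a bipartite transportation problem, so the whole theorem will be read off from max-flow/min-cut applied to an auxiliary network.

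For necessity, suppose $U$ is a balanced equilibrium and pick $\mathcal{S}\subseteq\mathcal{L}$. Since $\mathcal{A}_i\subseteq\mathcal{R}$ for every $i\in\mathcal{L}$, the definition of a balanced equilibrium gives
\[
\sum_{i\in\mathcal{S}}p_i \;=\; \sum_{i\in\mathcal{S}}\sum_{j\in\mathcal{A}_i}u_{ij} \;=\; \sum_{j\in\mathcal{A}_{\mathcal{S}}}\Bigl(\sum_{i\in\mathcal{S}\cap\mathcal{A}_j}u_{ji}\Bigr) \;\le\; \sum_{j\in\mathcal{A}_{\mathcal{S}}}\sum_{i\in\mathcal{A}_j}u_{ji} \;=\; \sum_{j\in\mathcal{A}_{\mathcal{S}}}p_j,
\]
where the middle equality uses $u_{ij}=u_{ji}$ and regroups the sum by the receiving side. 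The same calculation with $\mathcal{L}$ and $\mathcal{R}$ swapped gives the second inequality, proving necessity.

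For sufficiency, I would build a single-source, single-sink network $N$: add a source $s$ with arcs $(s,i)$ of capacity $p_i$ for each $i\in\mathcal{L}$; add a sink $t$ with arcs $(j,t)$ of capacity $p_j$ for each $j\in\mathcal{R}$; and for each adversary pair $(i,j)\in\mathcal{R}_{\mathcal{A}}$ with $i\in\mathcal{L},\;j\in\mathcal{R}$ add an arc of infinite capacity. Applying the first extended power condition to $\mathcal{S}=\mathcal{L}$ and the second to $\mathcal{S}=\mathcal{R}$ gives $\sum_{i\in\mathcal{L}}p_i=\sum_{j\in\mathcal{R}}p_j$, call this common value $P$. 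It suffices to show the max-flow value equals $P$, for then the flow necessarily saturates every arc incident to $s$ and to $t$, and setting $u_{ij}=u_{ji}=$ (flow on the $(i,j)$ arc) for adversary pairs, together with the diagonal assignment forced on non-adversarial countries, yields a balanced equilibrium by Proposition 2. For the max-flow bound, pick any finite $s$-$t$ cut $(\{s\}\cup T,\;\{t\}\cup T^c)$; finiteness forces $\mathcal{A}_{T\cap\mathcal{L}}\subseteq T\cap\mathcal{R}$, so writing $\mathcal{S}=T\cap\mathcal{L}$ the cut capacity is at least
\[
\sum_{i\in\mathcal{L}\setminus\mathcal{S}}p_i \;+\; \sum_{j\in\mathcal{A}_{\mathcal{S}}}p_j \;\ge\; \sum_{i\in\mathcal{L}\setminus\mathcal{S}}p_i \;+\; \sum_{i\in\mathcal{S}}p_i \;=\; P,
\]
by the first extended power condition applied to $\mathcal{S}$. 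Hence every cut has capacity at least $P$, so by max-flow/min-cut a flow of value $P$ exists, completing the sufficiency direction.

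The main obstacle I expect is book-keeping rather than depth: one has to check that in the bipartite case the two inequalities really are the exact min-cut inequalities for this flow network, and that both inequalities are needed to force $\sum_{\mathcal{L}}p_i=\sum_{\mathcal{R}}p_j$ so that the saturating flow actually satisfies the balanced-equilibrium exhaustion condition on both sides of the bipartition. Once that is set up, the construction of $U$ from the flow and the verification via Proposition 2 is routine, and the conversion between the Hall-type statement in the theorem and the min-cut inequality is immediate.
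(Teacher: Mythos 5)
Your proof is correct, and it follows essentially the route the paper itself intends: the paper gives no explicit proof of this theorem, but its subsequent Theorem~\ref{prop/flow} sets up exactly the source--sink network you build (capacities $p_i$ on $(s,i)$ and $(j,t)$, infinite capacity on adversary edges) and asserts that a balanced equilibrium corresponds to a saturating max flow, which is precisely the engine of your sufficiency argument. Your necessity computation via the symmetry and exhaustion conditions, and your min-cut verification that finiteness forces $\mathcal{A}_{\mathcal{S}}\subseteq T\cap\mathcal{R}$, correctly supply the details the paper leaves implicit.
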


\begin{theorem} (\textit{Equivalence between balanced equilibrium on Bipartite \text{adversary} Graph and Max Flow}) \label{prop/flow}For a power allocation game with a bipartite \text{adversary} graph $\mathbb{E} = \{\mathcal{V}, \mathcal{E}\}$ of the left vertex set $\mathcal{L}$ and the right vertex set $\mathcal{R}$, construct a flow network $\bar{\mathbb{E}}$ which satisfies the following 

\begin{itemize}
\item $\bar{\mathcal{V}} = \mathcal{V} \cup \{s, t\}$, with $s$ being the source and $t$ being the sink. 
\item $\bar{\mathcal{E}} = \{(s,i)|i \in \mathcal{L}\} \cup \{(i,j)|i \in \mathcal{L}, j \in \mathcal{R}\} \cup \{(j, t)|j \in \mathcal{R}\}$.
\item Edge capacities\footnote{Edge power in flow networks means the maximum power transmitted on edges.} in $\bar{\mathbb{E}}$ are:  $\bar{c}(s,i)=p_{i}$ if $i \in \mathcal{L}$; $\bar{c}(j,t)=p_{j}$ if $j \in \mathcal{R}$; $\bar{c}(i,j)=+\infty$ if $i \in \mathcal{L}$ and $j \in \mathcal{R}$.
\end{itemize}

The problem of finding a balanced equilibrium in the original game is equivalent to finding a max flow (min cut) $\bar{f}$ in $\bar{\mathbb{E}}$, which satisfies that $\bar{f}(s, i)=\bar{c}(s, i)$ if $i \in \mathcal{L}$ and  $\bar{f}(j,t)=\bar{c}(j,t)$ if $j \in \mathcal{R}$.

\end{theorem}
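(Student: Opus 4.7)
The plan is to exhibit an explicit bijection between balanced equilibria of the original game and max flows in $\bar{\mathbb{E}}$ that saturate every edge incident to the source or the sink. Given a balanced equilibrium $U$ of the game, I would define a candidate flow $\bar{f}$ by setting $\bar{f}(s,i)=p_i$ for each $i\in\mathcal{L}$, $\bar{f}(j,t)=p_j$ for each $j\in\mathcal{R}$, and $\bar{f}(i,j)=u_{ij}$ for each adversarial pair $(i,j)$ with $i\in\mathcal{L}$ and $j\in\mathcal{R}$. Conversely, given a max flow $\bar{f}$ that saturates all source and sink edges, I would define $u_{ij}=\bar{f}(i,j)$ for $i\in\mathcal{L}$, $j\in\mathcal{R}$ with $(i,j)\in\mathcal{R}_{\mathcal{A}}$, put $u_{ji}=u_{ij}$ by symmetry, set $u_{ii}=0$ for every country with at least one adversary, and set $u_{kk}=p_k$ for every country with no adversary.

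For the forward direction, I would verify that $\bar{f}$ is feasible and that it is actually a max flow. Capacity constraints hold by construction (the source and sink edges are saturated, and the inner edges have infinite capacity). Flow conservation at a vertex $i\in\mathcal{L}$ reduces to $\sum_{j\in\mathcal{A}_i}u_{ij}=p_i$, which is exactly condition (\ref{c2}) of the definition of a balanced equilibrium applied to $i$. Conservation at a vertex $j\in\mathcal{R}$ reduces to $\sum_{i\in\mathcal{A}_j}u_{ij}=p_j$, which follows by combining the symmetry condition (\ref{c3}), namely $u_{ij}=u_{ji}$, with condition (\ref{c2}) applied to $j$. The value of $\bar{f}$ is $\sum_{i\in\mathcal{L}}p_i$, which equals the capacity of the cut separating $\{s\}$ from the rest of $\bar{\mathbb{E}}$, so $\bar{f}$ is maximum.

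For the reverse direction, I would run the construction of $U$ from $\bar{f}$ and check the three defining conditions of a balanced equilibrium. Condition (\ref{c1}) holds by construction for countries with no adversaries. Condition (\ref{c2}) applied to $i\in\mathcal{L}$ becomes $\sum_{j\in\mathcal{A}_i}\bar{f}(i,j)=\bar{f}(s,i)$, which is flow conservation at $i$ combined with the saturation of $(s,i)$; the analogous argument at a vertex $j\in\mathcal{R}$ uses conservation at $j$ and the saturation of $(j,t)$, together with the symmetry $u_{ji}=u_{ij}$. Condition (\ref{c3}) is automatic from how $U$ is built. Thus $U$ is a balanced equilibrium, and by Theorem~\ref{t1} it is in fact a pure strategy Nash equilibrium.

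The main obstacle is essentially bookkeeping rather than conceptual: the adversary graph is undirected, but the flow network has all its inner edges oriented from $\mathcal{L}$ to $\mathcal{R}$, so a single edge variable $\bar{f}(i,j)$ has to encode the two entries $u_{ij}$ and $u_{ji}$ of $U$. The symmetry clause (\ref{c3}) of the balanced equilibrium definition is precisely what makes this encoding well-defined in both directions. Once the bijection is set up, the max-flow/min-cut equivalence stated in the theorem is immediate from the standard max-flow min-cut theorem applied to $\bar{\mathbb{E}}$, which simultaneously explains how the extended power condition of Theorem~\ref{prop/bipart-balanced equilibrium} re-emerges as the requirement that the cut $(\{s\}\cup\mathcal{S},\,\bar{\mathcal{V}}\setminus(\{s\}\cup\mathcal{S}))$ is no smaller than $\sum_{i\in\mathcal{L}}p_i$ for every $\mathcal{S}\subseteq\mathcal{L}$.
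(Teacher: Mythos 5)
Your proposal is correct, and in fact the paper states Theorem~\ref{prop/flow} without supplying any proof at all, so your bijection between balanced equilibria and source-and-sink-saturating flows is filling a genuine gap in the text rather than replicating an existing argument. The two directions check out: conservation at $i\in\mathcal{L}$ is exactly condition (\ref{c2}), conservation at $j\in\mathcal{R}$ is condition (\ref{c2}) combined with the symmetry (\ref{c3}), and saturating the cut around $s$ certifies maximality; your closing observation that finite min cuts must absorb $\gamma(\mathcal{S})$ and hence reproduce the extended power condition of Theorem~\ref{prop/bipart-balanced equilibrium} is also the right way to tie the two results together. One small point worth making explicit: the theorem as stated puts an infinite-capacity edge between \emph{every} pair $i\in\mathcal{L}$, $j\in\mathcal{R}$, but your reverse construction only produces a valid strategy matrix (with $u_{ij}=0$ for $j\notin\mathcal{F}_i\cup\mathcal{A}_i$) if the inner edges are restricted to adversarial pairs $(i,j)\in\mathcal{R}_{\mathcal{A}}$, as you implicitly assume; you should state that restriction, since otherwise a max flow could route positive flow across a non-edge of $\mathbb{A}$ and the induced $U$ would not be a legal power allocation matrix.
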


\section{Conclusions}

An obvious and important extension of this paper is to the case where the antagonism in the networked international environment makes up a $k$-partite graph. Realistically, this can be regarded as a $k$-sided game where countries could be friends with those from the same side and only be adversaries with those from the other $k-1$ sides. Moreover, the $k$-sided game is actually a generalization of a commonly known $k$-player game, which can expect to encompass a series of scenarios in countries' conflicts and cooperation of the real world. 
A natural question to ask is that, in this kind of environment, what kind of power condition for countries is it in order to balanced equilibrium to exist?

\bibliography{balancing}
\bibliographystyle{plain}

\end{document}